\crefname{rule}{rule}{rules}
\title{Sinkless Orientation Made Simple}
\author{Alkida Balliu}{Gran Sasso Science Institute, Italy}{alkida.balliu@gssi.it}{https://orcid.org/0000-0001-5293-8365}{}
\author{Janne H.\ Korhonen}{IST Austria, Austria}{janne.h.korhonen@gmail.com}{}{European Research Council (ERC) under the European Union’s Horizon 2020 research and innovation programme (grant agreement No 805223 ScaleML).}
\author{Fabian Kuhn}{University of Freiburg, Germany}{kuhn@cs.uni-freiburg.de}{}{}
\author{Henrik Lievonen}{Aalto University, Finland}{henrik.lievonen@aalto.fi}{https://orcid.org/0000-0002-1136-522X}{Academy of Finland (grant agreement No 333837).}
\author{Dennis Olivetti}{Gran Sasso Science Institute, Italy}{dennis.olivetti@gssi.it}{}{}
\author{Shreyas Pai}{Aalto University, Finland}{shreyas.pai@aalto.fi}{https://orcid.org/0000-0003-2409-7807}{}
\author{Ami Paz}{LISN, CNRS, France}{ami.paz@lisn.fr}{https://orcid.org/0000-0002-6629-8335}{Austrian Science Fund (FWF) and netIDEE (grant agreement No P~33775-N).}
\author{Joel Rybicki}{IST Austria, Austria}{joel.rybicki@ist.ac.at}{https://orcid.org/0000-0002-6432-6646}{}
\author{Stefan Schmid}{TU Berlin, Germany}{stefan.schmid@tu-berlin.de}{}{Austrian Science Fund (FWF) project DELTA (grant agreement No I 5025-N).}
\author{Jan Studený}{Aalto University, Finland}{jan.studeny@aalto.fi}{}{}
\author{Jukka Suomela}{Aalto University, Finland}{jukka.suomela@aalto.fi}{https://orcid.org/0000-0001-6117-8089}{}
\author{Jara Uitto}{Aalto University, Finland}{jara.uitto@aalto.fi}{https://orcid.org/0000-0002-5179-5056}{}
\authorrunning{A.\ Balliu et al.}
\keywords{Distributed graph algorithms, LOCAL model, SLOCAL model, sinkless orientation, round elimination}
\theoremstyle{definition}
\DeclareMathOperator{\poly}{poly}
\newcommand{\algoa}{\mathcal{A}}
\newcommand{\algob}{\mathcal{B}}
\newcommand{\algoc}{\mathcal{C}}
\newcommand{\A}{\mathcal{A}}
\newcommand{\slocal}{\ensuremath{\mathsf{SLOCAL}}\xspace}
\newcommand{\local}{\ensuremath{\mathsf{LOCAL}}\xspace}
\begin{document}
\maketitle

\begin{abstract}
The sinkless orientation problem plays a key role in understanding the foundations of distributed computing.
The problem can be used to separate two fundamental models of distributed graph algorithms, \local and \slocal:
the locality of sinkless orientation is $\Omega(\log n)$ in the deterministic \local model and $O(\log \log n)$ in the deterministic \slocal model.
Both of these results are known by prior work, but here we give new simple, self-contained proofs for them.
\end{abstract}

\newpage

\section{Introduction}

One of the fundamental challenges in the study of graph algorithms concerns the understanding of the \emph{locality} of the considered graph problem: given a node in the middle of a large graph, \emph{how far} do we need to see around that node to choose its output? For example, if we are interested in the graph coloring problem, how far do we need to see around a node before we can choose its color, so that the end result is a globally consistent coloring?

The notion of locality plays a particularly important role in characterizing the \emph{distributed complexity} of graph problems~\cite{Linial1992,Peleg2000}---for example, problems that are local can be solved in a distributed setting with a small number of communication rounds. The past decade has seen a successful research program~\cite{Brandt2016,chang16exponential,Ghaffari2017,Ghaffari2018,Balliu2018disc,Balliu2018stoc,Foerster2019,rozhon2020polylogarithmic} contributing to our systematic understanding of the fundamental interplay between locality, randomness, and the computational power of different models of distributed graph algorithms. 

In this work, we give a new, simple proof for one of the key results in this area: the sinkless orientation problem gives an exponential separation between the \local and \slocal models of computing. The standard approach for proving this result relies on fairly heavy-weight machinery, whereas our new proof is elementary and self-contained.

\subsection{Sinkless Orientation}

In the \emph{sinkless orientation} problem, we are given an undirected graph as input, and the task is to orient all edges so that all nodes of degree at least $3$ have at least one outgoing edge (i.e., they are not sinks). Here are some examples of valid solutions:
\begin{center}
\includegraphics[page=1]{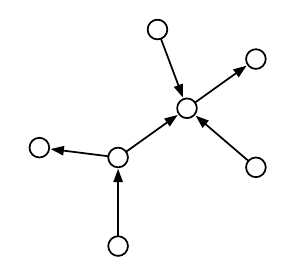}%
\hspace{1cm}%
\includegraphics[page=2]{figs.pdf}
\end{center}

A sinkless orientation always exists in any graph and it is easy to find given a global view of the input graph: Process each connected component separately. If the component is a tree, we can choose a leaf node $v$ and orient everything towards $v$.
Otherwise there is a cycle $C$, and we can then orient $C$ in a consistent direction and orient all other edges towards $C$, breaking ties arbitrarily.
The following figure illustrates both of these cases:
\begin{center}
\includegraphics[page=3]{figs.pdf}%
\hspace{1cm}%
\includegraphics[page=4]{figs.pdf}
\end{center}

However, this simple algorithm is inherently global---the orientation of a given edge depends on information arbitrarily far from it. The key question that was first explicitly asked in 2016 \cite{Brandt2016} regards the locality of the sinkless orientation problem: can one come up with a rule that always results in a sinkless orientation such that each edge is oriented based on the information that is within its radius-$T(n)$ neighborhood, where $T(n)$ is some sublinear function of the number of nodes $n$, or ideally a constant function independent of $n$?

\subsection{LOCAL and SLOCAL Models}

We consider the \local and \slocal models of (distributed) graph algorithms. For both models, the setting is as follows. We are given an input graph $G = (V,E)$ on $n$ nodes and the goal is to compute a sinkless orientation on $G$. Each node $v \in V$ has to produce a \emph{local output}, in our case an orientation of all edges incident to $v$. The local output of $v$ is determined by an algorithm that has access to the information available in $G$ within distance $T(n)$ from $v$, where $T \colon \mathbb N \to \mathbb N$ is a function of the size of the input graph. The key difference between the \local and \slocal models is in the way nodes are processed (see \cref{fig:models}):

\begin{description}
    \item[Deterministic \local model:] Each node $v \in V$ chooses its local output \emph{simultaneously in parallel} based on the information available within distance $T(n)$ from $v$. That is, each node $v$ maps its radius-$T(n)$ neighborhood in $G$ to an output value.

    \item[Deterministic \slocal model:] The nodes are processed \emph{sequentially} in some arbitrary order chosen by an adversary. When node $v \in V$ is processed, it chooses its internal state and output based on the information available within distance $T(n)$ in the graph. This information also includes the internal states of the nodes processed \emph{before} node~$v$.
\end{description}

\begin{figure}
\centering
\includegraphics[page=5]{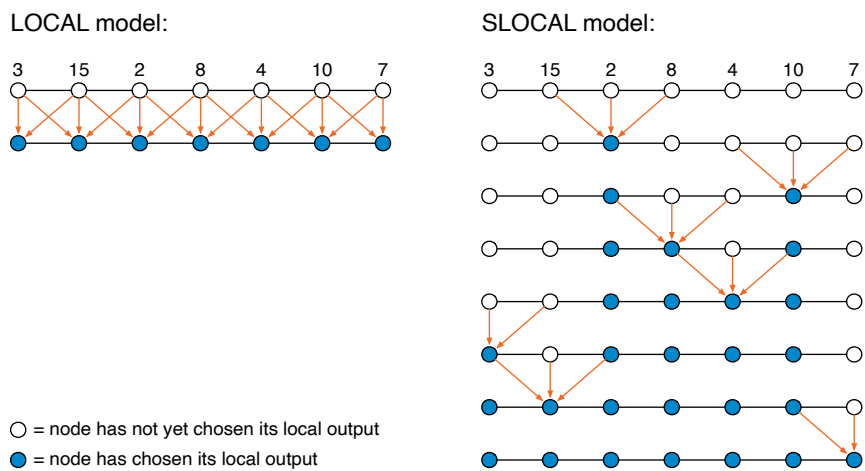}
\caption{Information flow in the \local and \slocal models; in this example we consider locality $T(n) = 1$, i.e., each node chooses its output based on its radius-$1$ neighborhoods. The input graph here is a path with $7$ nodes; the nodes are labeled by the adversary with unique identifiers. In the \local model, all nodes choose their final state simultaneously in parallel, while in the \slocal model, the nodes make their choices sequentially, in some order chosen by the adversary.}\label{fig:models}
\end{figure}

In both models, we assume that the number of nodes $n$ is known and that the nodes of the graph are labeled with unique identifiers from $1$ to $\poly(n)$; this is particularly important for the \local model so that one can break symmetry.

The \emph{locality of a graph problem $\Pi$} is the smallest distance sufficient to solving $\Pi$ in the given model. That is, if $\Pi$ has locality $T(n)$, then there is an algorithm $\A$ that (1) uses only information available within distance $T(n)$ to compute the output of any node $v$ and (2) the output is correct on any input graph, on any choice of unique identifiers, and, in the \slocal model, on any processing order of nodes.

\subsection{Is SLOCAL Any Stronger Than LOCAL?}

For any problem $\Pi$, the locality of $\Pi$ in the \slocal model is at most as large as the locality of $\Pi$ in the \local model: if we can solve $\Pi$ in the \local model so that each node makes a choice based on its radius-$T(n)$ neighborhood, we can do the same in the \slocal model: the algorithm can just ignore the internal states of nodes. However, the key question is if the locality of a problem can be much smaller in the \slocal model than in the \local model.

The answer may seem obvious. For example, consider the problem of coloring a path with $3$ colors:
\begin{itemize}
    \item In the \slocal model, we can solve this problem with locality $T(n) = 1$: each node can greedily pick a free color that is not yet used by any of its neighbors.
    \item In the \local model, there is no constant-locality algorithm---this is the seminal result by Linial~\cite{Linial1992}, and there is a simple version of the proof in \cite{Laurinharju2014}.
\end{itemize}
However, this problem only gives a slightly super-constant separation between the models: there is an algorithm with locality $T(n) = O(\log^* n)$ that solves the problem in the \local model by making clever use of the unique identifiers \cite{cole86deterministic,Barenboim2013}; here $\log^*$ is the inverse of a power tower, i.e., a very slowly-growing function. More generally, it turns out that any problem that can be solved in the \slocal model with locality $O(1)$ can be solved in the \local model with locality $O(\log^* n)$ \cite{Ghaffari2018}.

If we could always turn any \slocal algorithms into \local algorithms with only $O(\log^* n)$ overhead in locality, this would be great news for the designers of distributed algorithms: it is often much easier to reason about sequential \slocal algorithms than about parallel \local algorithms. However, sinkless orientation shows that this is not the case.

\subsection{Sinkless Orientation Separates LOCAL and SLOCAL}

Sinkless orientation can be used to prove a strong separation between deterministic \local and deterministic \slocal.
By prior work \cite{Brandt2016,chang16exponential,ghaffari17distributed,Ghaffari2018}, we know that:
\begin{restatable}{theorem}{thmmainlower}\label{thm:main-lower}
The locality of the sinkless orientation problem in the deterministic \local model is $\Omega(\log n)$.
\end{restatable}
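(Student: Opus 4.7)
The plan is to use the round elimination technique on a high-girth $3$-regular graph. First, I would fix a $3$-regular graph $G$ on $n$ nodes with girth $g = \Omega(\log n)$; such graphs exist by standard explicit or probabilistic constructions. For any deterministic \local algorithm with locality $T \le (g-1)/2$, the radius-$T$ neighborhood of every node of $G$ is a $3$-regular tree of depth $T$, so the algorithm is determined by a function mapping $\{1,\dots,\poly(n)\}$-labeled depth-$T$ trees to orientations of the three edges at the root.

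Second, I would set up a sequence of relaxed problems $\Pi_0 = \text{SO}, \Pi_1, \Pi_2, \ldots$ obtained by iterating a round elimination operator. Each $\Pi_{i+1}$ has edge labels drawn from an enlarged alphabet (subsets of the previous label set), with node constraint: for every way to pick one element from each incident edge-label set, the resulting labeling is legal for $\Pi_i$. The key lemma is that any locality-$T$ algorithm for $\Pi_i$ yields a locality-$(T{-}1)$ algorithm for $\Pi_{i+1}$, because after $T-1$ rounds a node already knows which set of labels the missing round could produce, and can commit to that set as its output.

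Third, I would iterate this $T$ times, obtaining a locality-$0$ algorithm for $\Pi_T$. A locality-$0$ algorithm produces each edge label from nothing but the endpoint identifiers, and by a pigeonhole argument over the $\poly(n)$ identifier space combined with the structure of $G$, some node must have all three incident labels forced into the all-incoming configuration once the relaxation chain has progressed for $\Theta(\log n)$ steps. This contradiction forces $T = \Omega(\log n)$.

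The main obstacle, and the step I would spend the most care on, is defining the round elimination operator cleanly enough that the relaxation preserves correctness at every step while the alphabet blow-up remains trackable, and then quantifying how quickly the constraint collapses so that a locality-$0$ instance is provably unsolvable. A clean way to do this is to exploit the symmetry of the regular tree view: the node constraint in $\Pi_{i+1}$ can be phrased as a single combinatorial inequality on the three edge-label sets, which in turn feeds into a potential-function argument that decreases by a constant factor per elimination step and bottoms out after $\Theta(\log n)$ rounds.
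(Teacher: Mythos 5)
Your overall strategy---round elimination on a high-girth regular graph, reducing locality $T$ down to locality $0$ and deriving a contradiction there---is the right skeleton, and it matches the paper at the highest level. But two of the steps you sketch are exactly the places where the standard round-elimination approach gets stuck with deterministic unique identifiers, and your proposal does not supply the ideas the paper uses to get past them.

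First, you keep the input graph $G$ fixed, and you let the ``nondeterminism'' that powers round elimination come from the unseen identifiers. This is the classical setup, but it runs into trouble precisely because identifiers are globally unique: the possible labelings of two disjoint subtrees hanging off a node $u$ are not independent, since they cannot reuse IDs, so the ``any combination of labels from the sets $S(u,v)$ can occur simultaneously'' step is not justified. The paper sidesteps this entirely by working in the \emph{supported} \local model: the graph $G$, its IDs, and even a $2$-coloring are all fixed once and for all, and the only thing that varies is which edges of $G$ belong to the input subgraph $H$. The edge subsets in disjoint subtrees \emph{can} be chosen independently, so the combination step goes through cleanly. This also explains the choice of a $5$-regular rather than $3$-regular graph: with degree $5$ one can ``turn off'' edges while still leaving enough nodes with degree at least $3$ to be constrained, and the counting in the base case works out.

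Second, your proposal has the problem evolve ($\Pi_0, \Pi_1, \ldots$ with growing set-valued alphabets) and then claims a ``potential function decreasing by a constant factor'' that bottoms out after $\Theta(\log n)$ steps. For sinkless orientation this picture is not right: the problem is essentially a fixed point of round elimination, so there is no collapsing chain of relaxations. The paper's round elimination lemma keeps the problem as sinkless orientation at every step (it merely swaps which colour class is active), and the $\Omega(\log n)$ bound comes from the girth bound $T < g/2 = \Omega(\log n)$, not from any decay of a potential. The final contradiction is then a short counting argument for the locality-$0$ algorithm (at most $2/5$ of the edges can be forced-incoming at black nodes, so some white node has three edges that can all be made outgoing), and this is again made possible by the supported model and the degree-$5$ slack. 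In short: the ``pigeonhole over the $\poly(n)$ identifier space'' and ``potential-function'' steps that you flag as the delicate part of your plan are exactly where a concrete mechanism is missing, and they are what the supported \local reformulation is designed to replace.
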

\begin{restatable}{theorem}{thmmainupper}\label{thm:main-upper}
The locality of the sinkless orientation problem in the deterministic \slocal model is $O(\log \log n)$.
\end{restatable}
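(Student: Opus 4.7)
The plan is to design a deterministic \slocal algorithm for sinkless orientation with locality $T(n) = O(\log \log n)$. The algorithm processes nodes in the adversarial order; when processing a node $v$, it inspects the ball $B(v, T(n))$, which reveals the graph topology together with the orientations already committed by previously processed nodes in that ball, and commits orientations of all edges incident to $v$ so that $v$ has at least one outgoing edge whenever $v$'s degree is at least $3$.

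When processing $v$, I would use the following case analysis. If $v$ has degree at most $2$, orient the remaining incident edges arbitrarily. If $v$ has degree at least $3$ and some previously processed neighbor has already committed the shared edge as outgoing from $v$, then $v$ is already satisfied. Otherwise the algorithm would search $B(v, T(n))$ for an \emph{anchor}---a node of degree at most $2$, a cycle in the still-uncommitted subgraph, or a previously processed node with an outgoing edge that can be extended to a path through $v$---and commit orientations along a path from $v$ to that anchor in a way that gives $v$ an outgoing edge.

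The heart of the proof is showing that such an anchor always exists within distance $T(n) = O(\log \log n)$. A direct counting argument gives only an $O(\log n)$ bound: if $B(v, r)$ in the residual subgraph contained no cycle and every node had degree at least $3$, then $B(v, r)$ would form a tree with at least $3 \cdot 2^{r-1}$ nodes, which exceeds $n$ once $r = \Omega(\log n)$. To tighten this to $O(\log \log n)$, the plan is to exploit the sequential nature of \slocal: each previously processed node acts as an anchor for its unprocessed neighbors, so across logical phases of processing the ``unanchored'' subgraph can be shown to lose either many nodes or much of its girth doubly-exponentially quickly.

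The main obstacle is making this doubly-exponential shrinkage precise in a self-contained way---this is what distinguishes the \slocal $O(\log \log n)$ bound from the \local $\Omega(\log n)$ lower bound of \cref{thm:main-lower}. My plan is to define a sequence of layers $V_0 \subseteq V_1 \subseteq \cdots$, where $V_i$ consists of the nodes that admit an anchor within $B(\cdot, i)$ in the residual subgraph, and use a potential-function argument to show that $|V \setminus V_{O(\log \log n)}| = 0$. A natural alternative is to adapt the doubly-exponential probability bound from the randomized \local analysis of sinkless orientation, translating the randomized events into deterministic choices made available by the \slocal model's freedom to inspect prior commitments.
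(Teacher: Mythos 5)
Your proposal identifies the right target but the core step---why $O(\log\log n)$ locality suffices rather than $O(\log n)$---is left as an acknowledged ``main obstacle,'' and the sketches you offer for closing it do not work as stated. The single-pass ``anchor search'' fails already on the first few nodes processed: in a $3$-regular graph of girth $\Omega(\log n)$, the first node the adversary hands you has no previously committed orientations, no degree-$\le 2$ node, and no cycle within distance $O(\log\log n)$, so there is no anchor to reach and no path to commit. The layered potential-function idea ($V_0 \subseteq V_1 \subseteq \cdots$, showing $V \setminus V_{O(\log\log n)} = \emptyset$) is not spelled out, and it is not clear what the potential would be or why it would shrink doubly-exponentially; the alternative you mention---porting the randomized \local shattering/failure-probability analysis---is precisely the heavyweight detour the paper is trying to avoid, and the translation from ``low failure probability'' to ``deterministic choice available to \slocal'' is itself a nontrivial derandomization step, not a free move.

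The paper closes the gap with a different decomposition that your plan is missing. It first introduces a relaxed problem, \emph{high-degree sinkless orientation}, where only nodes of degree $\geq \lfloor\log_2 n\rfloor + 1$ must avoid being sinks, and shows a simple greedy edge-processing algorithm solves it on any multigraph: orient toward an already-satisfied endpoint if one exists, otherwise toward the endpoint with fewer processed edges. The invariant is that an unsatisfied node with $b$ incoming edges sits in a component of $\geq 2^b$ nodes, so $b$ can never reach $\lfloor\log_2 n\rfloor + 1$. Then the general problem is reduced to this one via \emph{clustering}: compute a maximal independent set in $G^{2T+1}$ with $T = \lceil\log_2(\lfloor\log_2 n\rfloor + 1)\rceil = O(\log\log n)$, form clusters of radius $\leq 2T$, run the greedy high-degree algorithm on the resulting cluster \emph{multigraph} (it has $\le n$ super-nodes, so the same degree threshold works), and finally orient intra-cluster edges. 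The crucial dichotomy is at the cluster level, not the node level: a \emph{high-degree} cluster gets an outgoing inter-cluster edge from the greedy algorithm, and a \emph{low-degree} cluster ($< \lfloor\log_2 n\rfloor + 1$ inter-cluster edges) must contain a cycle or a degree-$\le 2$ node, because otherwise the radius-$T$ ball around its center would have $\geq 3\cdot 2^{T-1} > \log_2 n$ boundary edges. This is the $O(\log n)$-style counting argument you dismissed as insufficient, but applied \emph{inside a cluster of radius $O(\log\log n)$}, where the contradiction is against the cluster's low degree rather than against $n$. That change of scale is the idea your proposal is missing.
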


Unfortunately, even though these results play a key role in understanding the landscape of models of distributed computing (see \cref{sec:discussion} for the broader context), there have not been simple proofs for either of these results. While the theorems are related to \emph{deterministic} models, the prior proofs of \cref{thm:main-lower,thm:main-upper} take a detour through \emph{randomized} models, and apply fairly heavyweight machinery:
\begin{itemize}
    \item The prior proof of \cref{thm:main-lower} first shows that the locality is $\Omega(\log \log n)$ in the randomized \local model \cite{Brandt2016}; this requires a careful analysis of how the local failure probability of a randomized algorithm behaves in the so-called \emph{round elimination} technique. Then we can conclude that the locality is also $\Omega(\log \log n)$ in the deterministic \local model. Finally, we can apply a general \emph{gap result} to extend the lower bound to $\Omega(\log n)$ \cite{chang16exponential}.
    \item The prior proof of \cref{thm:main-upper} first constructs an algorithm with locality $O(\log \log n)$ in the randomized \local model \cite{ghaffari17distributed}; here one can use the so-called \emph{shattering} technique, and argue that after the randomized shattering phase, which orients only some edges of the graph, the connected components of what remains to be processed are small enough so that even if one solves them deterministically, locality of $O(\log \log n)$ suffices. Then one can apply a generic derandomization result that enables the simulation of randomized \local with deterministic \slocal \cite{Ghaffari2018}, and the result follows.
\end{itemize}

\subsection{Contributions and Key Ideas}

We provide new short, elementary, and entirely self-contained proofs for \cref{thm:main-lower,thm:main-upper}.

\subparagraph*{The lower bound.}
To obtain \cref{thm:main-lower}, we prove a stronger lower bound: it turns out to be convenient to work in the \emph{supported} version of the \local model~\cite{Schmid2013}. In the supported \local model, there is a \emph{support graph} $G$ that is known to all nodes in advance, and the input graph $H$ is a subgraph of $G$.
The fact that $G$ is globally known makes the supported model stronger than the usual \local model (e.g.\ graph coloring is trivial, as a proper vertex coloring of $G$ gives a proper vertex coloring of~$H$). We show that the locality of the sinkless orientation problem is $\Omega(\log n)$ in the supported \local model, which implies the same lower bound in the \local model.

\subparagraph{The upper bound.}
To prove \cref{thm:main-upper}, we introduce the \emph{high-degree sinkless orientation problem}, in which we only care that nodes with high degree are not sinks.
This problem is trivial to solve in the \slocal model.
We then provide an \slocal algorithm which constructs a virtual graph on top of the actual graph and solves the high-degree sinkless orientation problem on the virtual graph.
The algorithm then lowers the solution on the virtual graph to a solution for the ordinary sinkless orientation problem on the original graph.

\section{Sinkless Orientation Has Locality \texorpdfstring{\boldmath $\Omega(\log n)$}{Ω(log n)} in LOCAL}

In this section, we show that the locality of the sinkless orientation problem in the deterministic \local model is $\Omega(\log n)$. 
We in fact prove the lower bound in the stronger \emph{supported \local model}. In this variant of the \local model, there is a globally known \emph{support graph} $G = (V,E)$ with known assignment of unique identifiers, and the input is a subgraph $H$ of $G$. In an algorithm with locality $T(n)$, each node $v$ receives as input \emph{the entire structure} of the support graph $G$, including all the unique identifiers, and information about which edges in its radius-$T(n)$ neighborhood in $G$ belong to $H$; we refer to edges of $H$ as input edges. In our case, we would like to find a sinkless orientation in the input graph $H$.

\subparagraph*{Roadmap.} For technical convenience, we prove the result in a stronger \emph{bipartite} version of the supported \local model. The lower bound in this setting then implies lower bounds for (non-bipartite) \local and supported \local, by observing that algorithms from a weaker model can be translated to the stronger models with no overhead in locality.

The overall structure of our lower bound proof is as follows. We fix a bipartite $5$-regular graph $G$ with girth $\Omega(\log n)$, and an assignment of unique identifiers on $G$. We then show that in bipartite supported \local, any algorithm that solves sinkless orientation, even with the promise that the support graph is $G$, has locality $\Omega(\log n)$.

The proof has two main steps. First, we give a round elimination lemma showing that any sinkless orientation algorithm with locality $T$ on $H$ can be converted into an algorithm with locality $T-1$, if $T$ is sufficiently less than the girth of $G$. By iterating this lemma, we can turn an algorithm with locality $T$ into an algorithm with locality $0$. Second, we show that no such trivial algorithm with locality $0$ can exist, implying that any algorithm requires $\Omega(\log n)$ locality.

\subsection{Setup}

\subparagraph*{Bipartite model.} In bipartite supported \local, we are given a promise that the support graph $G$ is bipartite, and a $2$-coloring is given to the nodes as an input; we refer to the two colors as black and white.
In the bipartite model, we consider either the black or white nodes to be \emph{active}, and the other color to be \emph{passive}. All nodes of the graph run an algorithm as per the supported \local model; upon termination of the algorithm, the active nodes produce an output, and the passive nodes output nothing. The outputs of the active nodes must form a globally valid solution; in particular, in sinkless orientation, the outputs of the active nodes already orient all edges, and neither active or passive nodes can be sinks.

\subparagraph*{Sinkless orientation in bipartite model.}
We encode sinkless orientation in the bipartite supported \local model as follows.
Each active node outputs, for each incident input edge,
one label from the alphabet $\Sigma = \{ \mathsf{O}, \mathsf{I}\}$.
The edge-output $\mathsf{O}$ indicates that the edge is outgoing from the active node, and the edge-output $\mathsf{I}$
indicates it is incoming to the active node. An output is correct if for each active node of degree at least $3$, there is at least one output $\mathsf{O}$ on an incident input edge, and for each passive node of degree at least $3$, there is at least one output $\mathsf{I}$ on an incident input edge.
	The labels $\mathsf{O}, \mathsf{I}$ represent orientation w.r.t.\ the active node, and we require each active node to have at least one $\mathsf{O}$ label (indicating an outgoing edge), and each passive node to have at least one $\mathsf{I}$ label (indicating an edge incoming to an active neighbor, thus outgoing from the passive node we consider). Hence, any solution on general graphs can immediately be translated to a solution in the bipartite~model.

	In more detail, consider a sinkless orientation algorithm $\A$ with locality $T$ in the (supported) \local model, with some reasonable output encoding. To turn this into a bipartite (supported) \local algorithm, one first runs algorithm $\A$ in the bipartite model---this requires no modifications, as computation in the bipartite model is done exactly as in the original. After $\A$ has terminated, (1) the passive nodes discard the output of $\A$ and output nothing, and (2) the active nodes inspect the output of $\A$, and output $\mathsf{I}$ for each incident edge directed towards them, and $\mathsf{O}$ for each edge directed away from them in the output of $\A$. Since $\A$ is a sinkless orientation algorithm, these outputs also guarantee that each passive node has one edge with output $\mathsf{I}$ incident to it. In particular, it follows that lower bounds for bipartite algorithms are also lower bounds for the standard models.

\subsection{Step One: Round Elimination}\label{sec:restep}

\begin{lemma}\label{lemma:re}
Let $G$ be a fixed $5$-regular bipartite graph with girth $g$, and fixed unique identifiers and $2$-coloring of the nodes.
Let $0 < T < g/2$, and assume there is an algorithm $\A_T$ that solves sinkless orientation on support graph $G$ with locality $T$. Then there is an algorithm $\A_{T-1}$ that solves sinkless orientation on support graph $G$ with locality $T-1$.
\end{lemma}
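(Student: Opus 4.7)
The plan is a round-elimination argument that takes advantage of the supported-\local setting. Since $G$ and the identifier assignment are globally known and $T < g/2$, each active node $v$ can enumerate the possible radius-$T$ extensions of its own radius-$(T-1)$ view $\tau_v$; these correspond exactly to possible subsets of ``boundary'' $G$-edges that lie in $H$. For an active node $v$ with $H$-neighbor $u$, define
\[
L(v,u) \;=\; \bigl\{\ell \in \{\mathsf{O},\mathsf{I}\} : \text{some extension of } \tau_v \text{ to radius } T \text{ causes } \A_T \text{ to output } \ell \text{ on } (v,u)\bigr\},
\]
which is nonempty and can be computed from $\tau_v$. The algorithm $\A_{T-1}$ outputs the forced label whenever $L(v,u)$ is a singleton, and whenever $L(v,u)=\{\mathsf{O},\mathsf{I}\}$ it makes a decision via a deterministic tie-breaking rule based on the global IDs.

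Correctness reduces to two claims. On the active side, for every active $v$ of $H$-degree $\geq 3$ some $H$-neighbor $u$ satisfies $\mathsf{O}\in L(v,u)$: otherwise every extension of $\tau_v$ would force $\A_T$ to orient all of $v$'s edges inward, contradicting the correctness of $\A_T$. On the passive side, for every passive $u$ of $H$-degree $\geq 3$ some active neighbor $v_i$ satisfies $\mathsf{I}\in L(v_i,u)$: the actual input $H$ produces a radius-$T$ view of each $v_i$ extending $\tau_{v_i}$, and correctness of $\A_T$ on $H$ forces $\mathsf{I}$ on at least one edge $(v_i,u)$, so $\mathsf{I}$ lies in the corresponding $L(v_i,u)$.

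Given these two claims, $\A_{T-1}$ must commit to outputs on the $\{\mathsf{O},\mathsf{I}\}$-edges so that every active node ends up with at least one outgoing label and every passive node with at least one incoming label. A natural candidate rule is: if some incident $L(v,u)$ already equals $\{\mathsf{O}\}$, then $v$ labels its remaining $\{\mathsf{O},\mathsf{I}\}$-edges $\mathsf{I}$ to help its passive neighbors; otherwise $v$ labels the lex-smallest $\{\mathsf{O},\mathsf{I}\}$-neighbor with $\mathsf{O}$ and the rest with $\mathsf{I}$. The main obstacle is verifying that this rule (or a carefully engineered variant) cannot fail at any passive $u$: the worst case is a $u$ whose five active neighbors are all forced by the rule to orient their $(v_i,u)$ edge as $\mathsf{O}$. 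I expect to rule this out by combining the passive claim with the tree structure of $u$'s radius-$T$ neighborhood (guaranteed by girth $>2T$), which gives each $v_i$ enough flexibility elsewhere that at least one can afford to label $(v_i,u)$ as $\mathsf{I}$.
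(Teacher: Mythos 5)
Your proposal has a genuine gap, and it stems from a structural choice that differs from the paper's: you keep the \emph{same} side of the bipartition active in $\A_{T-1}$ as in $\A_T$, whereas the paper's round-elimination step \emph{switches} which color is active. This switch is the crucial idea you are missing. In the paper, if black nodes are active in $\A_T$, then $\A_{T-1}$ makes the white nodes active: each white node $u$ gathers its radius-$(T-1)$ view and, for each black neighbor $v$, enumerates all extensions of that view to $B(v,T)$, forming the set $S(u,v)$ of outputs $v$ could produce on $\{u,v\}$. Because $T<g/2$, the sets $B(v,T)\setminus B(u,T-1)$ for distinct neighbors $v,v'$ are disjoint, so $u$ knows that \emph{any combination} of one label from each $S(u,v)$ is realizable simultaneously. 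This gives $u$ full information to act: if all $S(u,v)\ni\mathsf{O}$, there would be a realizable input making $u$ a sink, contradicting the correctness of $\A_T$; hence some $S(u,v')=\{\mathsf{I}\}$, and $u$ can safely output $\mathsf{O}$ there. The passive (black) side is handled for free because on the true input $H$ one of the true $\A_T$-outputs is $\mathsf{O}$, which lands in the corresponding $S$ set and forces $u$ to output $\mathsf{I}$.

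Your version keeps $v$ active and has $v$ enumerate extensions of its \emph{own} $(T-1)$-view. Your two claims (``some $\mathsf{O}\in L(v,u)$'' and ``some $\mathsf{I}\in L(v_i,u)$'') are correct, but they are not enough, and this is exactly where you get stuck: the five active neighbors $v_1,\dots,v_5$ of a passive $u$ must jointly guarantee that at least one of them outputs $\mathsf{I}$ on its edge to $u$, yet each $v_i$ decides based only on $B(v_i,T-1)$, and no $v_i$ can see the others' views or their $L$-sets. The tie-breaking rule you sketch can fail: all five $v_i$ may have $L(v_i,u)=\{\mathsf{O},\mathsf{I}\}$, be in the ``no forced $\mathsf{O}$'' case, and all have $u$ as their lexicographically preferred neighbor, in which case they all output $\mathsf{O}$ and $u$ becomes a sink. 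Your appeal to ``enough flexibility elsewhere'' is precisely the coordination problem that the girth argument does not solve from the active side; the radius-$(T-1)$ views of the $v_i$ are not independent of one another (they overlap through $u$), and there is no local rule by which they can agree on who sacrifices an $\mathsf{O}$. Switching the active side eliminates this coordination problem entirely, because the single node $u$ then makes the decision with full knowledge of all its neighbors' possible behaviors and of their mutual independence. I'd recommend restructuring your proof around that active/passive swap.
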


\begin{proof}
The proof proceeds by the standard round elimination strategy. Let us assume without loss of generality that black nodes are active in $\A_T$. For a non-negative integer $t$ and any node $v \in V$, let us denote by $B(v,t)$ the nodes within distance $t$ from the node $v$ in the graph $G$.
We construct an algorithm $\A_{T-1}$ where white nodes are active. In algorithm $\A_{T-1}$, each white node $u \in V$ performs the following steps:
\begin{enumerate}
    \item Node $u$ gathers the inputs in its $(T-1)$-radius neighborhood $B(u,T-1)$.
    \item For each neighbor $v$ of $u$, the node $u$ enumerates all possible input graphs $H'$ on $B(v,T)$ that are compatible with the actual input graph $H$ on $B(u,T-1)$. For each such $H'$, $u$ simulates $\A_T$ to compute what output $v$ would output on the edge $\{ u, v \}$ under input~$H'$. Let $S(u,v)$ denote the set of all possible outputs obtained for edge $\{ u, v\}$ this way.
    \item
    If $S(u,v)=\{ \mathsf{I} \}$ then $u$  outputs $\mathsf{O}$ on $\{u,v\}$, and otherwise it outputs $\mathsf{I}$ on it.
\end{enumerate}
We now prove $\A_{T-1}$ produces a valid solution for sinkless orientation.

Consider a white node $u$, and two of its neighbors $v$, $v'$ in $H$.
Since $T < g/2$, we have $B(v, T) \cap B(v',T) = B(u,T-1)$, and thus the inputs in $B(v,T) \setminus B(u,T-1)$ do not affect the output of $v'$ in $\A_T$, and likewise the inputs in $B(v',T) \setminus B(u,T-1)$ do not affect the output of $v$ in $\A_T$.
Thus, any combination of $\mathsf{L}_v \in S(u,v)$ and $\mathsf{L}_{v'} \in S(u,v')$ may occur as an output:
for any such $\mathsf{L}_v$ and $\mathsf{L}_{v'}$, there is an input graph such that $v$ in $\A_T$ outputs $\mathsf{L}_v$ for the edge $\{ u, v \}$, and $v'$ outputs $\mathsf{L}_{v'}$ for the edge $\{ u, v' \}$.

Let $u$ be a white node of degree at least $3$ in $H$ with neighbors $N(u)$ in $H$.
By the above argument, for any choice of one $\mathsf{L}_v \in S(u,v)$ for each neighbor $v\in N(u)$ of $u$, there is an input graph on which $\A_T$ outputs $\mathsf{L}_v$ for the edge $\{ u, v \}$.
If each $S(u,v)$ contains $\mathsf{O}$, there would be an input graph
on which $\A_T$ outputs $\mathsf{O}$ for all incident input edges of $u$, rendering it incorrect.
Hence, at least one neighbor $v'$ of $u$ satisfies $S(u,v') = \{ \mathsf{I} \}$, and in $\A_{T-1}$ where $u$ is active, $u$ outputs $\mathsf{O}$ on the edge $\{u,v'\}$.

On the other hand, consider black node $v$ of degree at least $3$ with neighbors $N(v)$ in $H$.
On the true input $H$ node $v$ in $\A_T$ will output $\mathsf{O}$ on an incident edge $\{ v, u \}$, for some $u \in N(v)$.
In $\A_{T-1}$, the node $u$ will consider the input $H$ on $B(v,T)$ (among other inputs), so we have $\mathsf{O} \in S(u, v)$.
Thus, in $\A_{T-1}$ the node $u$ will output $\mathsf{I}$ on $\{ v, u \}$, and $v$ has an incident edge labeled $\mathsf{I}$ as desired.
\end{proof}

\subsection{Step Two: There Exists No Algorithm with Locality 0}

\begin{lemma}\label{lemma:0round}
Let $G$ be a fixed $5$-regular bipartite graph with girth $g$, and assume unique identifiers and $2$-coloring on $G$ are fixed. The locality of sinkless orientation in bipartite supported \local on support graph $G$ is greater than $0$.
\end{lemma}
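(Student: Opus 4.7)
With locality $0$ the active nodes learn nothing about which edges of $G$ lie in $H$: the radius-$0$ ball of a node is just the node itself. So any algorithm reduces to a fixed family of labelings $\ell_u \colon N_G(u) \to \{\mathsf{O}, \mathsf{I}\}$, one per white (active) node $u$, determined by $G$ and the identifier of $u$ alone, that must yield a valid sinkless orientation on \emph{every} subgraph $H \subseteq G$. I would then derive a contradiction by a simple pigeonhole on these fixed labels.

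I set $S_u = \ell_u^{-1}(\mathsf{O})$. Since $u$ has degree $5$ in $G$, either $|S_u| \leq 2$ or $|S_u| \geq 3$, and I split on this. In the first case I pick a white $u$ with at least three $\mathsf{I}$-labeled incident edges and take $H$ to consist of exactly those three edges; then $u$ has degree $3$ with all incident labels $\mathsf{I}$, violating the active-node condition, while the three other endpoints have degree $1$ and so carry no passive constraint. In the second case every $u$ satisfies $|S_u|\geq 3$, and since $G$ is $5$-regular bipartite we have $|W|=|B|$, so $\sum_{u \in W}|S_u|\geq 3|B|$; this sum equals the total degree on the black side of the bipartite graph $H_O=\{\{u,v\}: v\in S_u\}$, so by averaging some black node $v^*$ has at least three neighbors $u_1,u_2,u_3$ in $H_O$. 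Taking $H=\{\{u_i,v^*\} : i=1,2,3\}$ gives three edges each labeled $\mathsf{O}$ by its white endpoint, so $v^*$ is a passive node of degree $3$ with no $\mathsf{I}$-labeled incident edge, violating the passive-node condition, while the $u_i$ have degree $1$ and thus no active constraint.

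There is no genuinely hard step here: locality $0$ makes the algorithm oblivious to the input, and the lemma reduces to pigeonhole plus a single counting inequality using $5$-regularity and the balanced bipartition. The one point I would state carefully is the encoding: the algorithm pre-commits one label per edge of $G$ incident to each white node, with only labels on edges actually lying in $H$ entering the correctness criterion. This is what lets me shrink $H$ to just the handful of edges needed to both force a degree-$\geq 3$ violation at one chosen node and trivialize the constraints at all others.
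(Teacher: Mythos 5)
Your proof rests on the claim that a locality-$0$ algorithm is oblivious to $H$: ``the radius-$0$ ball of a node is just the node itself,'' so the outputs are a fixed labeling $\ell_u$ per active node. This does not match the model the paper uses. In the supported \local model as set up here, a locality-$0$ algorithm at a node $v$ \emph{does} see which of $v$'s incident edges belong to $H$; what it does not see is anything one hop away. The paper's own proof makes this explicit (``the output of each node $v_i$ depends only on its incident input edges'') and accordingly labels each edge with the \emph{set} of outputs the algorithm can produce across the $2^{5}$ possible local inputs, not with a single committed symbol. Nothing like your fixed $\ell_u$ exists.

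This is not a cosmetic mismatch: it breaks the final step in Case~2 (and, more mildly, Case~1). That the edge $\{u_i, v^*\}$ ``can output $\mathsf{O}$'' only means there is \emph{some} set of incident input edges at $u_i$ that makes $u_i$ output $\mathsf{O}$ on it; it need not happen when $\{u_i, v^*\}$ is $u_i$'s only input edge. Your construction takes $H$ to be exactly the three chosen edges, so the bad outputs might simply not occur. The paper's proof avoids this by additionally putting into $H$ whatever other incident edges around each active node are needed to trigger $\mathsf{O}$, and it uses bipartiteness --- the active nodes around a common passive node are pairwise non-adjacent, so their radius-$0$ views are disjoint --- to make these per-node choices simultaneously without conflict. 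Your counting step (at most $2/5$ of the edges can be forced-$\mathsf{I}$ on the active side, so some passive node has three can-be-$\mathsf{O}$ incident edges) is essentially the paper's argument and is fine; the missing ideas are the set-valued edge labels and the careful bad-input construction they necessitate. Note also that proving impossibility only for oblivious algorithms would be strictly weaker than what the theorem chain needs, since the locality-$0$ algorithm obtained by iterating \cref{lemma:re} does see its incident input edges.
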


\begin{proof}
Assume for contradiction that there is an algorithm $\A_0$ with locality $0$ and black nodes as active. Label each edge $e$ of $G$ by the set of all outputs $\A_0$ can output for $e$ when $e$ is part of the input. For any black node $v$, there must be at least three edges labeled with either $\{ \mathsf{O} \}$ or $\{ \mathsf{O}, \mathsf{I} \}$, as otherwise, for some input $v$ would have exactly three incident input edges on which it would output $\mathsf{I}$.
Since every edge is incident to exactly one black node, at most $2/5$ of the edges are labeled $\{ \mathsf{I} \}$.
Hence, there is a white node $u$ such that $u$ is incident to at least three edges $\{ u, v_1 \}$, $\{ u, v_2 \}$ and $\{ u, v_3 \}$ labeled with either $\{ \mathsf{O} \}$ or $\{ \mathsf{O}, \mathsf{I} \}$. Now consider an input where these three edges are the only input edges incident to $u$. Since the output of each node $v_i$ depends only on its incident input edges, we can select for each $v_i$ an input where $v_i$ outputs $\mathsf{O}$ for edge $\{ u, v_i \}$. Moreover, since $\A_0$ is an algorithm with locality $0$ and nodes $v_1$, $v_2$ and $v_3$ are not neighbors, we can do this for all of them simultaneously. Thus, there exists an input where $\A_0$ outputs $\mathsf{O}$ on all incident input edges of the passive node $u$, a contradiction.
\end{proof}

\subsection{Putting Things Together}

\begin{theorem}\label{thm:supported-lb}
The locality of the sinkless orientation problem in the deterministic supported \local model is $T(n) = \Omega(\log n)$.
\end{theorem}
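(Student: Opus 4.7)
The plan is to chain the two preceding lemmas on a carefully chosen support graph. First I would invoke a standard existence result: for every sufficiently large $n$ there exists a $5$-regular bipartite graph on $\Theta(n)$ nodes with girth $g = \Omega(\log n)$, available either via a probabilistic construction or via explicit Ramanujan-type graphs. Fix one such $G$, together with an arbitrary assignment of unique identifiers from $1$ to $\poly(n)$ and the canonical $2$-coloring coming from its bipartition.

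Next I would argue by contradiction in the bipartite supported \local model on support graph $G$. Suppose there were a sinkless orientation algorithm with locality $T$ satisfying $0 < T < g/2$. Applying \cref{lemma:re} once produces an algorithm with locality $T-1$ (with the active color swapped), and since the locality only decreases, the hypothesis $0 < T' < g/2$ remains satisfied throughout. Iterating the lemma $T$ times yields a sinkless orientation algorithm with locality $0$, which contradicts \cref{lemma:0round}. Although that lemma is stated assuming black nodes are active, its proof is entirely symmetric in the two colors, so the same contradiction applies no matter which color ends up active after the alternating round-elimination steps. Hence any algorithm on support graph $G$ in the bipartite supported \local model must have locality at least $g/2 = \Omega(\log n)$.

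To conclude, I would appeal to the reduction described at the start of the section: any algorithm in the (non-bipartite) supported \local model can be run in the bipartite supported \local model with no overhead in locality, so the $\Omega(\log n)$ lower bound transfers directly to the supported \local model, giving the theorem. The substantive work is entirely packaged inside the two lemmas; what remains here is essentially bookkeeping (checking the girth-versus-locality inequality throughout the iteration and noting the color-symmetry of \cref{lemma:0round}) together with the standard external ingredient, namely the existence of high-girth $5$-regular bipartite graphs. The only conceptual subtlety worth flagging is that round elimination alternates the active color, but since both colors are ruled out at locality $0$ by the symmetric form of \cref{lemma:0round}, this alternation causes no difficulty.
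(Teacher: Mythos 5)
Your proposal is correct and follows essentially the same route as the paper: fix a high-girth $5$-regular bipartite support graph, iterate \cref{lemma:re} down to locality $0$, and conclude via \cref{lemma:0round}. Your explicit remark about the color alternation and the symmetry of \cref{lemma:0round} is a careful point that the paper leaves implicit (the lemma's statement is color-neutral even though its proof is written with black nodes active), but the overall argument is the same.
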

\begin{proof}
Let $G$ be a bipartite 5-regular graph with girth $g = \Omega(\log n)$.
Observe that we can obtain one e.g.~by taking the bipartite double cover of any $5$-regular graph of girth $\Omega(\log n)$, which are known to exist (see e.g.,~\cite[Ch.\ 3]{bollobas2004extremal}).

Assume that there is a supported \local algorithm $\A_T$ that solves sinkless orientation with locality $T < g/2$ on support graph $G$. This implies that there is a bipartite supported \local algorithm for sinkless orientation on $G$ running in time $T$. By repeated application of \cref{lemma:re}, there is a sequence of bipartite supported \local algorithms
$\A_T, \A_{T-1}, \dotsc, \A_1, \A_0,$
where algorithm $\A_i$ solves sinkless orientation with locality $i$.

In particular, $\A_0$ solves sinkless orientation with locality $0$. By \cref{lemma:0round}, this is impossible, so algorithm $\A_T$ cannot exist.
\end{proof}

\thmmainlower*
\begin{proof}
Any \local algorithm $\A$ with locality $T(n)$ can be simulated in supported \local with locality $T(n)$ by ignoring non-input edges: simply run $\A$ on the input graph $H$. Thus, the claim follows immediately from \cref{thm:supported-lb}.
\end{proof}

\section{Sinkless Orientation Has Locality \texorpdfstring{\boldmath $O(\log\log n)$}{O(log log n)} in SLOCAL}

We now show that the locality of the sinkless orientation problem in the deterministic \slocal model is $O(\log\log n)$.

\subparagraph*{Roadmap.}
As the first step, we consider a variant of the sinkless orientation problem called \emph{high-degree sinkless orientation}, where only nodes with degree $\Omega(\log n)$ are required not to be sinks. We show that this problem can be solved with a simple greedy algorithm that processes edges one at a time, and this algorithm can be implemented in \slocal with locality $O(1)$.
As the second step, we show how to reduce the general case to high-degree sinkless orientation.

As the high-level idea, we compute a clustering of the nodes using an $\Theta(\log \log n)$-independent set, and solve high-degree sinkless orientation on the graph formed by the clusters and edges between the clusters. We can then orient edges inside each cluster independently without creating sinks; high-degree clusters will already have one outgoing edge oriented away from the cluster, and low-degree clusters contain either a node of degree at most two or a cycle. Moreover, this idea can be implemented as an \slocal algorithm with locality $O(\log \log n)$.

\subsection{Step One: High-Degree Sinkless Orientation}\label{sec:high-degree-so}

\newcommand{\degreebound}{\ensuremath{\lfloor\log_2 n\rfloor+1}}
\newcommand{\localitybound}{\ensuremath{\lceil\log_2(\degreebound)\rceil}}

\emph{The high-degree sinkless orientation problem} is a variation of the sinkless orientation problem in which we only care that nodes with degree of at least $\degreebound$ are not sinks; we call such nodes \emph{high-degree nodes}. For technical purposes, we will assume in this section that the input graph $G = (V,E)$ is a multigraph.

\subparagraph*{Greedy algorithm.}

We describe a greedy algorithm $\algoa$ for solving high-degree sinkless orientation that orients the edges of the input multigraph $G =(V,E)$ one at a time.
During the execution of $\algoa$, we say that a node $v \in V$ is \emph{satisfied} if it either is not a high-degree node or at least one incident edge has been oriented away from $v$; otherwise, $v$ is \emph{unsatisfied}. Algorithm $\algoa$ processes each edge $e = \{ u, v \}$ using the following rules:
\begin{enumerate}
	\item If either $u$ or $v$ is already satisfied, the algorithm orients the edge towards the satisfied node, breaking ties arbitrarily.
	\item\label[rule]{rule:unsatisfied-combination} Otherwise, both $u$ and $v$ are unsatisfied high-degree nodes. We orient the edge towards the node which has fewer adjacent edges already processed, breaking ties arbitrarily.
\end{enumerate}

\begin{lemma}
  Algorithm $\algoa$ produces a valid solution to high-degree sinkless orientation.
\end{lemma}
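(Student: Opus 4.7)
The plan is to argue by contradiction. Suppose that when $\algoa$ terminates, some high-degree node $v$ remains unsatisfied, so every incident edge of $v$ has been oriented toward $v$ and hence $\deg(v)\ge\lfloor\log_2 n\rfloor+1$. From $v$ I will build a rooted ``witness tree'' whose size grows exponentially in $\deg(v)$ and whose vertices are pairwise distinct nodes of $G$; this will contradict $|V|=n$.

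First I record a monotonicity invariant: once a node becomes satisfied it stays satisfied, because gaining an outgoing edge cannot be undone. Using this, I observe that whenever an edge $\{x,y\}$ is oriented toward an endpoint $y$ that is unsatisfied at the moment of processing, the orientation must have been produced by \cref{rule:unsatisfied-combination}: rule~1 would have steered the edge toward a satisfied endpoint, but $y$ is not satisfied, so $x$ cannot be either. Consequently both endpoints are unsatisfied at that moment and $x$ had at least as many already-processed adjacent edges as $y$; because $x$ is unsatisfied, every processed edge at $x$ is incoming to $x$.

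Next I construct the tree. List $v$'s incident edges in processing order as $e_1,\dots,e_d$, where $e_i=\{u_i,v\}$ is oriented $u_i\to v$. Just before $e_i$ is processed, $v$ has exactly $i-1$ incoming edges and is unsatisfied, so the invariant yields that $u_i$ is unsatisfied with at least $i-1$ already-processed edges, all incoming to $u_i$. I attach $u_1,\dots,u_d$ as children of $v$, then recursively expand each $u_i$ by taking its first $i-1$ incoming edges and their far endpoints as grandchildren, and so on. Letting $T(k)$ denote the minimum size of a subtree rooted at a node with $k$ such witnesses, I get $T(0)=1$ and $T(k)=1+\sum_{j=0}^{k-1}T(j)$, which solves to $T(k)=2^k$. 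Hence the tree contains at least $2^d\ge 2^{\lfloor\log_2 n\rfloor+1}>n$ tree nodes.

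The main obstacle is showing that these tree nodes are pairwise distinct vertices of $G$, so that the count actually contradicts $|V|=n$. Suppose some node $w$ appears at two tree positions. Each non-root occurrence comes with an edge oriented from $w$ to its tree parent, processed at a time when $w$ was unsatisfied; the earlier such event leaves $w$ satisfied immediately afterward, so at the later occurrence rule~1 would have oriented the edge toward the now-satisfied $w$, contradicting the orientation demanded by the tree construction. If instead $w=v$ appears as a descendant, then $v$ has an outgoing edge, contradicting that $v$ is unsatisfied at the end. Either case yields a contradiction, so all tree vertices are distinct, and $n\ge 2^d>n$ completes the proof.
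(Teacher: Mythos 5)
Your argument is sound in outline but follows a genuinely different route from the paper's. The paper maintains a \emph{forward} invariant over the algorithm's execution: looking at the subgraph of edges oriented by \cref{rule:unsatisfied-combination}, it shows by a one-step induction that an unsatisfied node with $b$ incoming edges lies in a connected component of at least $2^b$ vertices, which is impossible once $b\ge\lfloor\log_2 n\rfloor+1$. You instead work \emph{backward} from a hypothetical bad final state, unrolling a recursive witness tree and counting its vertices via the recurrence $T(k)=1+\sum_{j<k}T(j)=2^k$. The two arguments are dual and rest on the same inequality ($2^{b_v}+2^{b_u}\ge 2^{b_v+1}$ when $b_v\le b_u$); the witness tree is arguably more explicit, but it obliges you to prove that the tree vertices are pairwise distinct --- something the paper's ``count the component'' formulation gets for free.

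That distinctness step is where your write-up has a small gap. You argue that if a node $w$ appears at two tree positions, the two parent-edge events are temporally ordered and the earlier one satisfies $w$ before the later one fires. That is correct when the two parent edges are distinct edge-processing events. But if the two occurrences of $w$ share the \emph{same} parent edge $\{w,p\}$ (which is not excluded a priori, and the cluster graph is a multigraph), there is only one event and no ``earlier''/``later''; all your reasoning shows is that the parent node $p$ itself then occupies two distinct tree positions, so the argument merely reduces the problem to a smaller instance rather than closing it. The standard patch is to choose the offending pair of positions to minimize the sum of their depths (so the shared-parent-edge case would contradict minimality), or equivalently to run the distinctness claim as an induction on depth, handling $w=v$ as you already do. With that addition your proof is complete.
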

\begin{proof} At each step of the execution of $\algoa$, consider the connected components formed by the edges that have been processed \emph{by \cref{rule:unsatisfied-combination}} up to current step. We want to show that the following invariant holds:
	if an unsatisfied node $v \in V$ has $b$ edges oriented towards $v$, then the current connected component has at least $2^b$ nodes. This suffices to prove the theorem, as any unsatisfied node after the termination of the algorithm would need to be part of a component containing at least $2^{\degreebound} > n$ nodes, and therefore the component needs to be larger than the whole graph, a contradiction.
The invariant trivially holds before any edges have been processed, as each node has $0$ edges directed towards them and each current component consists of one node. The invariant also trivially remains true after any step where we process an edge $e = \{ u, v\}$, where $u$ or $v$ is satisfied.
Consider now the case where the algorithm processes an edge $e = \{ u, v \}$ with both $u$ and $v$ unsatisfied, and let the number of processed edges incident to $u$ and $v$ be $b_u$ and $b_v$, respectively.   Without loss of generality, we may assume that $b_v \leq b_u$ and that $\algoa$ orients $e$ towards $v$. Node $u$ is now satisfied, and node $v$ has indegree $b_v + 1$. Since the invariant held before this step, we have that the new connected component containing $v$ now has size at least
$
    2^{b_v} + 2^{b_u} \ge 2^{b_v} + 2^{b_v} = 2^{b_v + 1},
$ 
implying that the invariant holds.
\end{proof}

\subsection{Step Two: Sinkless Orientation on General Graphs}

We start by describing our algorithm for sinkless orientation on general graphs in three steps.
Each one of these steps can be implemented in the \slocal model with locality $O(\log \log n)$, assuming that the output from previous steps is available at the nodes.
We defer the proof that these steps can be combined into a single-step \slocal algorithm with locality $O(\log \log n)$ until the end of the section.
In the following, let $T = \localitybound$.

\subparagraph*{Clustering.} We construct a \emph{clustering} of the input graph $G = (V,E)$ by computing a maximal independent set $I$ in graph $G^{2T + 1}$ and assigning each node to the cluster of the closest independent set node $v \in I$, breaking ties arbitrarily. For node $v \in I$, we denote by $C_v$ the cluster corresponding to $v$. We say that an edge $e \in E$ is an \emph{inter-cluster} edge if its endpoints are in  different clusters, and an \emph{intra-cluster} edge otherwise.

We note that the radii of the clusters are bounded:
All nodes in the radius-$T$ neighborhood of node $v \in I$ belong to cluster $C_v$.
On the other hand, for every node $u$ in $C_v$, the distance between $u$ and $v$ is at most $2T$.

We now define a virtual \emph{cluster graph} with a node for each cluster $C_v$ for $v \in I$, and adding an edge between $C_v$ and $C_u$ for every edge of the original graph that connects a node in $C_v$ to $C_u$, preserving duplicate edges. That is, the cluster graph can be a multigraph.

Finally, we note that this clustering can be done with locality $O(T)$ by a simple greedy \slocal algorithm:
When the algorithm processes node $v$, it checks whether there are any other nodes belonging to set $I$ in the radius-$(2T+1)$ neighborhood of $v$.
If there are, then node $v$ belongs to the cluster of the nearest such node, and otherwise the algorithm adds node $v$ to set $I$.

\subparagraph*{Orienting inter-cluster edges.} As the next step, we compute a high-degree orientation of the cluster graph, using the greedy algorithm of \cref{sec:high-degree-so}. Since there is a one-to-one correspondence between the edges of the cluster graph and edges between the clusters in the input graph $G$, this naturally induces an orientation of inter-cluster edges in $G$. We observe that under this partial orientation, any cluster $C_v$ with degree at least $\degreebound$ has at least one edge oriented away from $C_v$, as the number of nodes in the cluster graph is at most $n$ and thus $C_v$ counts as a high-degree node.

Again, this step can be implemented in the \slocal model with locality $O(T)$:
When the algorithm processes a node that is adjacent to an unprocessed inter-cluster edge, it can fully see both of the clusters in its radius-$O(T)$ neighborhood, including the direction of inter-cluster edges that have been previously processed.

\subparagraph*{Orienting intra-cluster edges.} Finally, we show that given the orientation of inter-cluster edges as above, the intra-cluster edges can be oriented without creating any sinks of degree $3$ or higher. We have two cases to consider:
\begin{itemize}
	\item \emph{High-degree} clusters with at least $\degreebound$ inter-cluster edges have at least one outgoing inter-cluster edge.
	\item \emph{Low-degree} clusters with less than $\degreebound$ inter-cluster edges may have all inter-cluster edges directed towards the cluster.
\end{itemize}
We show that in both cases, it is possible to compute an orientation of intra-cluster edges based on the internal structure of the cluster and the orientation of the boundary edges so that no node of degree at least $3$ is a sink.

For a high-degree cluster $C$, we know that there is a node $v \in C$ with an outgoing inter-cluster edge. In this case, picking an arbitrary spanning tree for $C$, orienting its edges towards $v$, and orienting remaining edges arbitrarily clearly suffices.

For a low-degree cluster $C$, we first observe that $C$ cannot be locally tree-like:
\begin{lemma}
  A low-degree cluster $C_v$ contains either a cycle or a node with degree 1 or 2.
\end{lemma}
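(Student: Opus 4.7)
My plan is to argue by contradiction. I suppose the induced subgraph of $G$ on $C_v$ is acyclic and that every node of $C_v$ has $G$-degree at least $3$, and aim to contradict the low-degree hypothesis that $C_v$ has fewer than $\degreebound$ inter-cluster edges.

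First, I would focus on the connected component $C'$ of the induced subgraph on $C_v$ that contains the center $v$. Since $C_v$ is acyclic, $C'$ is a tree. Since $B(v,T)$ is connected (shortest paths stay inside) and is contained in $C_v$, it is contained in $C'$. I then lower-bound $|C'|$ by a BFS argument rooted at $v$: every node $u$ at distance less than $T$ from $v$ has all of its $G$-neighbors inside $B(v,T) \subseteq C'$, so its degree in the tree $C'$ equals $d_G(u) \geq 3$. This forces $v$ to have at least $3$ children and every node at depths $1, \ldots, T-1$ to have at least $2$ children, so
\[
|C'| \;\geq\; 1 + 3 + 3 \cdot 2 + \cdots + 3 \cdot 2^{T-1} \;=\; 3 \cdot 2^{T} - 2 \;\geq\; 3\,\degreebound - 2,
\]
where the last inequality uses $2^T \geq \degreebound$ by the definition of $T = \localitybound$.

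Next, I would double-count degrees over $C'$. On one hand, $\sum_{u \in C'} d_G(u) \geq 3|C'|$. On the other hand, this sum equals $2(|C'|-1)$ (twice the tree edges) plus the number of $G$-edges with exactly one endpoint in $C'$. Because $C'$ is a full connected component of the induced subgraph on $C_v$, no such external edge can land inside $C_v$, so every one of them is an inter-cluster edge of $C_v$. Rearranging gives at least $|C'| + 2 \geq 3\,\degreebound$ inter-cluster edges incident to $C_v$, which contradicts the low-degree assumption that there are fewer than $\degreebound$ of them.

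The main obstacle, I expect, is the bookkeeping needed to justify $B(v,T) \subseteq C_v$ and the identification of ``edges leaving $C'$'' with inter-cluster edges. The first point uses the fact that $I$ is a maximal independent set in $G^{2T+1}$, so every other independent-set node lies at distance strictly greater than $T$ from each vertex of $B(v,T)$ and hence $v$ is the unique closest IS node to each such vertex, independently of the tiebreaking. The second point uses the elementary but easy-to-miss observation that in an induced subgraph, distinct connected components cannot be joined by any edge of $G$. Once these are in place, the BFS size estimate and the handshake count combine cleanly.
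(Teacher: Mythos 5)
Your proof is correct and takes essentially the same approach as the paper: a BFS argument to lower-bound the size of the tree component containing $v$, followed by a degree-counting argument to bound the number of boundary edges from below. In fact, your explicit handshake count fills in a step the paper leaves terse, namely the assertion that the number of boundary edges is at least the number of nodes at distance $T$ from $v$.
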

\begin{proof}
  Assume for contradiction that the cluster does not contain a cycle and that the degree of every node is at least 3.
  Recall that all nodes within distance $T$ of the cluster center $v$ are contained in $C_v$, and thus there are at least
  \begin{equation*}
    3 \cdot 2^{T-1} > 2^{T} = 2^{\localitybound} \ge \degreebound
  \end{equation*}
  nodes in $C_v$ at distance $T$ from $v$.
  Moreover, it follows that there are at least this many edges on the boundary of the cluster, and thus the cluster is adjacent to at least $\degreebound$ inter-cluster edges, a contradiction
\end{proof}

If the cluster contains a cycle, then we can orient that cycle in a consistent manner, and orient the rest of the edges towards the cycle.
Otherwise the cluster contains a node with degree 1 or 2, in which case we orient all edges towards that node.

As the radius of each cluster is bounded by $2T$, every node can see the whole cluster it belongs to within its radius-$O(T)$ neighborhood.
Therefore the algorithm can orient the intra-cluster edges in a consistent manner with locality $O(T)$.

\subparagraph*{\boldmath Composability of \slocal algorithms.}
To conclude the description of our algorithm, we to show that one can compose a multiple-step \slocal algorithm into a one-step \slocal algorithm.
This is a well-known result \cite{Ghaffari2017}, but we include a short proof for completeness.
\begin{lemma}
  \label{lemma:slocal-composition}
  Let $\algoa$ and $\algob$ be \slocal algorithms with localities $T_\algoa$ and $T_\algob$, respectively, and let $\algob$ depend on the output of $\algoa$.
  Then there exists an \slocal algorithm $\algoc$ with locality $T_\algoa + 2T_\algob$ that solves the same problem as $\algob$ without dependency on the output of $\algoa$.
\end{lemma}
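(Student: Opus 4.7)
The plan is for $\algoc$ to simulate both $\algoa$ and $\algob$ together in a single SLOCAL pass. When $\algoc$ processes a node $v$, it produces the output that $\algob$ would give at $v$, by first making sure simulated $\algoa$-outputs are determined for every node within radius $T_\algob$ of $v$, and then applying $\algob$'s local rule at $v$. The internal state that $\algoc$ maintains at each node records both the node's simulated $\algoa$-output and its simulated $\algob$-state, so that later invocations can reuse them.

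For the first half, I would argue that when $\algoc$ processes $v$, it can obtain $\algoa(u)$ for every $u$ within radius $T_\algob$ of $v$ using only information within radius $T_\algoa + T_\algob$ of $v$. If $u$ has already been processed by $\algoc$, then $\algoa(u)$ is available in $u$'s stored state. Otherwise, $\algoc$ simulates $\algoa$ at $u$ as if $u$ were the next node in $\algoa$'s SLOCAL order: this requires only $u$'s radius-$T_\algoa$ neighborhood together with the $\algoa$-outputs already committed by predecessors in that neighborhood, all of which lie within radius $T_\algoa + T_\algob$ of $v$.

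The main obstacle is global consistency of the simulated $\algoa$-outputs: two different invocations of $\algoc$ must not assign different values to the same unprocessed $u$, or else $\algob$ will be fed an inconsistent virtual $\algoa$-solution and may produce an incorrect answer. To resolve this, I would have $\algoc$ commit each newly simulated value $\algoa(u)$ into the internal state of the currently processed node, and have subsequent invocations first search for an existing commitment before computing their own. If some earlier invocation at $v'$ already committed $\algoa(u)$, then $v'$ lies within radius $T_\algob$ of $u$ (since $v'$ itself needed the value), so a later invocation at $w$ needs only to inspect nodes within radius $T_\algob$ of $u$, i.e., within radius $2T_\algob$ of $w$. Combined with the radius $T_\algoa + T_\algob$ needed to generate a fresh value on the fly, this yields the claimed locality bound of $T_\algoa + 2T_\algob$. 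A straightforward induction on $\algoc$'s processing order then shows that the accumulated $\algoa$-commitments form a prefix of a valid SLOCAL execution of $\algoa$ and that the $\algob$-outputs built on top of them form a valid SLOCAL execution of $\algob$, which solves the original problem.
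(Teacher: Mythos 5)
Your proof takes essentially the same approach as the paper: interleave a virtual \slocal execution of $\algoa$ with the execution of $\algob$, store each freshly simulated $\algoa$-output at the currently processing node (which is within distance $T_\algob$ of the node it describes), and pay an extra $T_\algob$ to retrieve stored values, yielding locality $T_\algoa + 2T_\algob$. The construction and correctness argument match the paper's proof; the only minor imprecision is your intermediate claim that a fresh simulation of $\algoa(u)$ needs only radius $T_\algoa + T_\algob$ from $v$, since stored $\algoa$-outputs of committed predecessors of $u$ may themselves reside up to $T_\algob$ away, which is exactly where the final $T_\algoa + 2T_\algob$ comes from.
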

\begin{proof}
  To compute the output of algorithm $\algob$ for node $v$, algorithm $\algoc$ needs to first compute the output of $\algoa$ in the radius-$T_\algob$ neighborhoods of $v$.
  The challenge here is that $\algoc$ cannot just recompute the output of $\algoa$ every time from scratch as the output may depend on previously processed nodes in the neighborhood.
  To enable this simulation, we allow $\algoc$ to store the output of $\algoa$ for node $u$ at some other node in the radius-$T_\algob$ neighborhood of $u$.

  Algorithm $\algoc$ starts the processing of node $v$ by collecting the output of $\algoa$ in the radius-$T_\algob$ neighborhood of $v$.
  As this output can be stored within distance $T_\algob$ from the actual node, algorithm $\algoc$ requires $2T_\algob$ locality to do this.
  For every node $u$ in the radius-$T_\algob$ neighborhood of $v$ that does not have output for $\algoa$ available, algorithm $\algoc$ collects the radius-$T_\algoa$ neighborhood of $u$ and computes the output for $\algoa$; this can be done with locality $T_\algoa + 2T_\algob$.
  The algorithm then stores the output of node $u$ at node $v$, so that it will not be recomputed later.
  Algorithm $\algoc$ now knows the output of $\algoa$ for all nodes in the radius-$T_\algob$ neighborhood of $v$.
  Therefore it can directly use $\algob$ to compute the output for $v$.
\end{proof}

\thmmainupper*
\begin{proof}
  We can apply \cref{lemma:slocal-composition} twice to combine the three-step \slocal algorithm we described above into a one-step \slocal algorithm.
  As each of the three steps has locality $O(T)$, the final algorithm has locality $O(T) = O(\log \log n)$, completing the proof.
\end{proof}

\section{Discussion and Broader Context}\label{sec:discussion}

In this work, we have presented simple, self-contained proofs of \cref{thm:main-upper,thm:main-lower}, which show that the locality of the sinkless orientation problem in the \local model is exponentially larger than in the \slocal model. We will now briefly discuss the broader context and the role of the sinkless orientation problem and the \slocal model in understanding the foundations of distributed computing.

\subparagraph{Complexity of distributed sinkless orientation.} While we present in this works a lower bound in the \local model and an upper bound in the \slocal model, we note that locality of sinkless orientation is fully understood in these models:
\begin{itemize}
	\item In deterministic \local model, sinkless orientation has locality $\Theta(\log n)$~\cite{Brandt2016,chang16exponential,ghaffari17distributed}.
	\item In randomized \local model, sinkless orientation has locality $\Theta(\log \log n)$~\cite{Brandt2016,ghaffari17distributed}.
	\item In deterministic \slocal model, sinkless orientation has locality $\Theta(\log \log n)$~\cite{ghaffari17distributed,Ghaffari2018}.
	\item In randomized \slocal model, sinkless orientation has locality $\Theta(\log \log \log n)$~\cite{ghaffari17distributed,Ghaffari2018}.
\end{itemize}

\subparagraph{The role of sinkless orientation in understanding the Lov\'{a}sz Local Lemma.}
The sinkless orientation problem was introduced in \cite{Brandt2016} with the purpose of understanding the locality of the constructive Lov\'asz Local Lemma problem in the distributed setting.

Lov\'asz Local Lemma (LLL) is a classic result in probability theory that can be used to show the existence of various combinatorial objects. For example, one can use LLL to prove that a sinkless orientation exists in any graph \cite{Brandt2016}.

In the distributed setting, the key question is the locality of \emph{constructive, algorithmic Lov\'asz Local Lemma}: given a problem where LLL guarantees the \emph{existence} of a solution, what can we say about the locality of \emph{finding} such a solution (e.g.\ in the \local or \slocal model)? For many interesting problems we can prove that a solution exists by using LLL, and hence a generic way to solve LLL in the distributed setting gives a distributed algorithm for all these problems. Since LLL can be used to find a sinkless orientation, any \emph{lower bound} on the locality of sinkless orientation implies also a lower bound on the locality of general LLL algorithms.

\subparagraph{The role of sinkless orientation in understanding splitting problems.}
Sinkless orientation can be seen as the most relaxed version of the \emph{degree splitting} problem, for which two variants exist, \emph{directed} and \emph{undirected}. The directed variant asks for an orientation of the edges such that each node has roughly the same number of incoming and outgoing edges. The undirected variant asks for a coloring of the edges with red and blue such that each node has roughly the same number of red and blue incident edges. Observe that on bipartite two-colored graphs these two problems are equivalent. It is known \cite{ghaffari17distributed} that efficient algorithms for degree splitting allow us to obtain efficient algorithms for e.g.\ edge coloring, and hence understanding the easiest splitting variant (sinkless orientation) may give insights for understanding the more general case.

\subparagraph{The role of sinkless orientation in understanding round elimination.}
In order to prove the $\Omega(\log \log n)$ rounds lower bound for the sinkless orientation problem, authors of \cite{Brandt2016} used the so-called \emph{round elimination} technique. Since then, this technique has been better understood, and sinkless orientation played a key role in developing this technique, which has been since then used to show lower bounds for many fundamental problems \cite{Balliu2019,BBOrulingset,BBKU22,Brandt2019automatic}. On a high level, the standard way of applying this technique works as follows:
\begin{enumerate}
	\item First, prove a lower bound for deterministic algorithms in a weaker setting, where nodes do not have IDs, using a strategy similar to what we do in \Cref{sec:restep}.
	\item Then, lift this lower bound to a stronger setting, where nodes have no IDs but ran\-dom\-iza\-tion is allowed. This step is quite non-trivial, since it requires one to track how the failure probability evolves when making the algorithm one round faster.
	\item Finally, convert the obtained randomized lower bound into a stronger deterministic lower bound for the \local model, by using non-trivial techniques typically used to prove \emph{gap results} in the \local model.
\end{enumerate}
One of our contributions is to simplify this three step process by showing how to directly handle unique identifiers in a round elimination proof. A similar concept for handling unique IDs, called the ID graph technique, was independently discovered in \cite{Brandt2022}.

\subparagraph{\boldmath The role of \slocal.}
The \slocal model has played a key role in understanding the \local model itself. 
One of the major challenges that we encounter in the \local model is the fact that all nodes act in parallel, and they have to decide their output at the same time. The \slocal model abstracts away this issue, since in this model nodes are processed sequentially. Hence, developing algorithms for the \slocal model may be much easier than developing algorithms for the \local model. Combining this with the fact that, by paying some overhead, we have black box ways to convert \slocal algorithms to \local ones \cite{Ghaffari2018}, this gives us an easier way to design \local algorithms. Moreover, \slocal played an important role for understanding the role of randomness in the \local model. In fact, it has been shown that any randomized \local algorithm can be derandomized by paying an $O(\poly \log n)$ overhead \cite{Ghaffari2018}. This result has been shown by providing an \slocal algorithm as an intermediate step.

\subparagraph{\boldmath Supported \local model.} The supported \local model was originally introduced in the context of \emph{software-defined networks} (SDNs). The underlying idea is that the communication graph $G$ represents the unchanging physical network, and the input graph represents the logical state of the network to which the control plane (here, distributed algorithm) needs to respond to; see reference~\cite{Schmid2013} for more details. However, supported \local have proven to be useful as a purely theoretical model for lower bounds~\cite{Foerster2019, HaeuplerWZ21}. 

\newpage

\bibliography{simple-so}

\end{document}